\newtheorem{proposition}{Proposition}
\newtheorem{theorem}{Theorem}
\newtheorem{corollary}{Corollary}
\theoremstyle{definition}
\newtheorem{remark}{Remark}
\newtheorem{example}{Example}[section]
\newtheorem{definition}{Definition}
\newcommand{\cal}{\EuScript}
\renewcommand{\leq}{\leqslant}
\DeclareMathOperator{\Maps}{Maps}
\renewcommand{\L}{{\cal L}}
\newcommand{\RR}{\mathbb R}
\newcommand{\p}{\partial}
\newcommand{\g}{\mathbf{g}}
\def\a{\alpha}
\def\d{\delta}
\def\s{\sigma}
\def\G {\Gamma}
\def\X  {{\bf X}}
\def\N {{\cal N}}
\def\Y  {{\bf Y}}
\def\A  {{\bf A}}
\def\B  {{\bf B}}
\def\F {{\cal F}}
\def\S {{\mathbf S}}
\def\E {{\mathbf E}}
\def\e {{\mathbf e}}
\def\P {{\mathbf P}}
\def\l {{\lambda}}
\def\g {{\gamma}}
\def\wD #1 {{\widehat {\Delta_{#1}}}}
\def\w {\omega}
\def\om {{\boldsymbol \omega}}
\newcommand{\bs}{{\boldsymbol{s}}}
\newcommand{\rh}{{\boldsymbol{\rho}}}
\newcommand{\io}{{\boldsymbol{\iota}}}
\newcommand{\et}{{\boldsymbol{\eta}}}
\title[Odd Laplacians:  potential and modular class]
{Odd Laplacians: geometrical meaning of potential, and modular class}
\author{H.~M.~Khudaverdian}
\author{M.~Peddie}
\address{School of Mathematics,  University of Manchester,
 Oxford Road,  Manchester   M13 9PL,  UK}
\email{khudian@manchester.ac.uk\\matthew.peddie@manchester.ac.uk }
\keywords{operators on half-densities, odd Poisson manifold,
modular class,  odd symplectic manifold, canonical odd Laplacian,
compensating field.}
\subjclass[2000]{53D17, 58A50, 81R99}
\begin{document}

\maketitle

%\centerline{Odd Laplacians: their potential as compensating field,
%and modular class}

\begin{abstract}
   A second order self-adjoint operator $\Delta=S\partial^2+U$ is
uniquely defined by its
principal symbol $S$ and potential $U$ if it acts on half-densities.
We analyse the potential $U$ as  a compensating field (gauge field)
 in the sense that it compensates
the action of coordinate transformations  on the second
derivatives in the same way as an affine connection
compensates the action of coordinate transformations
on first derivatives in the first order operator,
a covariant derivative, $\nabla=\partial+\Gamma$.
   Usually a potential $U$ is derived from other geometrical constructions
such as a volume form, an affine connection, or a Riemannian
structure, etc.
  The story is different if $\Delta$ is an odd operator on a supermanifold.
In this case the second order potential becomes a primary object.
 For example, in the case of an odd symplectic supermanifold,
the compensating field of the canonical odd Laplacian
    depends only on this
  symplectic structure, and can be expressed
 by the formula obtained by K.Bering.
   We also study modular classes of odd Poisson manifolds
   via $\Delta$-operators, and consider an example of a
  non-trivial modular class
    which is related with the Nijenhuis bracket.

\end{abstract}

 \section {Second order self-adjoint operator on half-densities}

  Let $\S=S^{ab}\p_b\otimes\p_a$ be a rank 2,
contravariant
symmetric tensor field
 on a (super)manifold $M$.
 We consider the class  $\F_\S$ of second
order operators acting on half-densities on $M$
 such that every operator
  $\Delta$ of this class  obeys the following conditions:
\begin{itemize}

\item
  $\Delta$ has principal symbol $\S$; in local coordinates
  $\Delta={1\over 2}S^{ab}(x)\p_b\p_a+\dots$;
\item
  Operator $\Delta$ is self-adjoint:
$\Delta^*=\Delta$, i.e.
\end{itemize}
             \begin{equation}\label{defofadjoint1}
\langle\Delta\bs_1,\bs_2\rangle=
%(-1)^{p(\Delta)p(\bs_1)}
\langle\bs_1,\Delta^*\bs_2\rangle\,,
\end{equation}
    where $\langle\,,\,\rangle$ is the natural scalar product on
half-densities:  
         \begin{equation*}\label{scalarproduct}
\langle\bs_1,\bs_2\rangle=\int_M s_1(x)s_2(x)|Dx|\,,
\end{equation*}
$\bs_1=s_1(x)\sqrt {|Dx|}$,
   $\bs_2=s_2(x)\sqrt {|Dx|}$ are two arbitrary
half-densities (with compact
support).

\begin{remark}
A density of weight $\lambda$, $\bs=s(x)|Dx|^\l$,
is multiplied by the $\l^{th}$ power
of the Jacobian under a change of coordinates. Densities of weight $\l=0$ are
functions on the manifold, and those of weight $\l=1$ are volume forms. In this article
we mostly consider densities of weight $\l={1\over 2}$, half-densities.
\end{remark}

\begin{proposition}
 For an arbitrary rank 2, contravariant symmetric tensor field
$\S=\S(x)$
on a manifold $M$, the class  $\F_\S$ is not empty, and any
two operators in this class  differ by a scalar function:
 if $\Delta,\Delta'\in \F_\S$ then
                \begin{equation}\label{difisscalarfunction}
  \Delta' - \Delta = F(x)\,.
                \end{equation}
 One can say that $\F_\S$ is an affine space of second order operators
associated with the vector space of functions on $M$.

\end{proposition}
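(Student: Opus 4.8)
The plan is to isolate two purely local facts and then globalize. \emph{Reduction.} If $\Delta,\Delta'\in\F_\S$ then $D:=\Delta'-\Delta$ is a differential operator of order $\leq1$ on half-densities, because its principal symbol vanishes. I would then invoke the canonical decomposition of a first order operator on half-densities: the degree-one part of its symbol is a vector field $X$, so $D-\L_X$ has vanishing symbol and hence is multiplication by a function $c$; here $\L_X$ denotes the Lie derivative acting on half-densities, $\L_X\bigl(s\sqrt{|Dx|}\bigr)=\bigl(X^a\p_as+\tfrac12(\p_aX^a)s\bigr)\sqrt{|Dx|}$ in local coordinates. So \eqref{difisscalarfunction} reduces to the claim that self-adjointness of $D$ forces $X=0$.

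\emph{Uniqueness.} The point is that $\L_X$ is anti-self-adjoint while multiplication operators are self-adjoint. Integrating by parts in $\langle\bs_1,\bs_2\rangle=\int_M s_1s_2\,|Dx|$ one gets $\langle\L_X\bs_1,\bs_2\rangle=-\langle\bs_1,\L_X\bs_2\rangle$, the half-weight term $\tfrac12(\p_aX^a)$ being exactly what cancels the divergence term produced by the integration by parts, while $\langle c\bs_1,\bs_2\rangle=\langle\bs_1,c\bs_2\rangle$ is immediate. Hence $D^*=-\L_X+c$, so $D=D^*$ forces $\L_X=0$, i.e. $X=0$, and then $D=c$ is a scalar function.

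\emph{Existence.} First I would write down local models: over a coordinate chart $U$ put $\Delta_U\co s\sqrt{|Dx|}\mapsto\tfrac12\,\p_a\!\bigl(S^{ab}\p_bs\bigr)\sqrt{|Dx|}$. This has principal symbol $\S$, and for compactly supported half-densities on $U$
\[
\langle\Delta_U\bs_1,\bs_2\rangle=\tfrac12\int_U\p_a\!\bigl(S^{ab}\p_bs_1\bigr)s_2\,|Dx|=-\tfrac12\int_US^{ab}(\p_bs_1)(\p_as_2)\,|Dx|,
\]
which is symmetric under $\bs_1\leftrightarrow\bs_2$ because $S^{ab}$ is symmetric; thus $\Delta_U$ satisfies both defining conditions over $U$. (Invariantly, $\Delta_U$ is the conjugation by $\sqrt\rho$ of the operator $f\mapsto\tfrac1{2\rho}\p_a\bigl(\rho S^{ab}\p_bf\bigr)$ on functions, with $\rho=|Dx|$ the coordinate volume form; this is why $\Delta_U$ depends on the chart and must be glued.) To glue, choose a locally finite atlas $\{U_i\}$, the operators $\Delta_i:=\Delta_{U_i}$, a subordinate partition of unity $\{\phi_i\}$, and set $\Delta:=\tfrac12\sum_i\bigl(\phi_i\Delta_i+\Delta_i\phi_i\bigr)$. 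The sum is locally finite; each summand $\phi_i\Delta_i+\Delta_i\phi_i$ is self-adjoint, hence so is $\Delta$; and since passing to the principal symbol kills the commutator corrections and $\sum_i\phi_i\equiv1$, the principal symbol of $\Delta$ is $\sum_i\phi_i\S=\S$. Therefore $\Delta\in\F_\S$, so $\F_\S\neq\varnothing$; together with the previous paragraph this presents $\F_\S$ as an affine space modelled on $\fun(M)$.

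The step needing genuine care — the main obstacle — is the sign bookkeeping of the Uniqueness paragraph in the super case: one must use the correct rules for Berezin integration by parts and for commuting $S^{ab}$ past the derivatives, and verify that $\int_MS^{ab}(\p_bs_1)(\p_as_2)$ is graded-symmetric under $\bs_1\leftrightarrow\bs_2$ for the graded-symmetric tensor $S$; the identity $\L_X^*=-\L_X$ rests on the same computation. The remaining ingredients — the symbol bookkeeping and the partition-of-unity gluing — are routine.
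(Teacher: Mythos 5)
Your proof is correct. The uniqueness half is essentially the paper's argument: the paper also observes that $\Delta'-\Delta$ has order $\leq 1$ and that self-adjointness kills the leading coefficient $L^a$ (via $(L^a\p_a+\dots)^*=-L^a\p_a+\dots$); your version merely packages this as the decomposition $D=\L_\X+c$ with $\L_\X^*=-\L_\X$ and $c^*=c$, which is the same computation phrased through Proposition~\ref{firstorderisliederivative}. The existence half is where you genuinely diverge. The paper applies the partition of unity once, to manufacture a global volume form $\rh=\rho(x)|Dx|$, and then writes a single global operator $\Delta^{(\rh)}\bs=\tfrac12\sqrt{\rh}\,{\rm div}_\rh(\S\,df_\bs)$ whose self-adjointness is checked in one stroke; the payoff is an explicit closed formula \eqref{deltaviarho2} for the operator and its compensating field \eqref{potentialforstandardoperator}, which the rest of the paper uses repeatedly. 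You instead take the chartwise operators $\Delta_U=\tfrac12\p_a(S^{ab}\p_b\,\cdot\,)$ (the $\rho\equiv 1$ case of \eqref{deltaviarho2}) and glue them by the symmetrized sum $\tfrac12\sum_i(\phi_i\Delta_i+\Delta_i\phi_i)$, using that the anticommutator with a real multiplication operator preserves self-adjointness and that commutator corrections are invisible at the level of the principal symbol. This is a clean soft argument that avoids verifying invariance of any global formula, at the cost of producing an operator with no usable closed expression; it is closer in spirit to the paper's alternative construction in Remark~\ref{defwithoutvolumeform} (summing symmetrized products of Lie derivatives) than to the proof actually given. Your closing caveat about graded signs in the super case is well placed, since the paper's Section~1 proof is written without them.
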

\begin{proof}  First prove \eqref{difisscalarfunction}.
Indeed, operator $\Delta'-\Delta$ has to be an operator of order $\leq 1$
since both operators $\Delta$ and $\Delta'$ have the same
principal symbol. The self-adjointness  condition
implies that the order of this operator
is not equal to $1$; if
  $\Delta'-\Delta=L^a(x)\p_a+\dots$ then self-adjointness implies
              \begin{equation*}\label{firstorderantiselfadjoint}
      L^a(x)\p_a+\dots=
\left(L^a(x)\p_a+\dots\right)^*=-L^a(x)\p_a+\dots \Rightarrow L^a\equiv 0\,.
               \end{equation*}
Thus we come to equation \eqref{difisscalarfunction}.

   To prove that the class $\F_\S$
is not empty consider (using a partition of unity
argument,) an arbitrary volume form $\rh=\rho(x)|Dx|$.
Now using this volume
form we construct a second order operator belonging to the class
   $\F_\S$.
  Every half-density $\bs=s(x)\sqrt{|Dx|}$ defines a function
  $f_\bs={\bs\over\sqrt \rh}={s(x)\over\sqrt{\rho(x)}}$.
This function defines a
  covector field
$df_\bs$,
and this covector field  defines a
vector field $\X_\bs=\S df_\bs$ via the contravariant
tensor field $\S$.
   Considering the divergence of this
vector field with respect to the volume form $\rh$, we come
to the operator $\Delta^{(\rh)}$ on half-densities
defined in the following way:
          \begin{equation*}\label{deltaviarho}
      \Delta^{(\rh)} \bs=
              {1\over 2}\sqrt\rh
         {\rm div\,}_\rh \X_\bs=
        {1\over 2}\sqrt \rh {\rm div\,}_\rh \left(\S df_\bs\right)=
               {1\over 2}\sqrt\rho(x)
               {1\over \rho(x)}
                      \p_a
                    \left(
         \rho (x)S^{ab}(x)\p_b
               \left(
         {s(x)\over \sqrt\rho(x)}
           \right)\right)\sqrt{|Dx|}
           \end{equation*}
            \begin{equation}\label{deltaviarho2}
       =  {1\over 2} \left(
           \p_a\left(S^{ab}{\p_b s}\right)
       -{1\over 2}
             \p_a
           \left(
               S^{ab}
       {\p_b\log\rho}\right)s-
       {1\over 4}{\p_a\log\rho}S^{ab}
       {\p_b\log\rho}s
          \right)\sqrt{|Dx|}.
            \end{equation}
It is easy to check that it  is a
self-adjoint operator. Every operator $\Delta\in \F_\S$
differs from $\Delta^{(\rh)}$
by a scalar function:
        $
\F_\S\ni\Delta=\Delta^{(\rh)}+F(x).
        $
\end{proof}

We see that self-adjointness uniquely defines a second order operator
on half densities by its symbol up to a function.
 It is useful to look at the analogous statement for
  first order operators.

\begin{proposition}\label{firstorderisliederivative}
Let $L$ be first order anti-self-adjoint operator on half-densities:
   $L^*=-L$ with principal symbol, vector field  $\X=X^a\p_a$.
Then operator $L$ is the Lie derivative of half-densities along the
field $\X$, $L=\L_\X$,
                   \begin{equation}\label{liederivativeofhalfdensity}
\L_\X\colon\quad \L_\X\bs=\L_\X(s(x)\sqrt{|Dx|})=
         \left(
  X^a(x)\p_a s(x)+{1\over 2}\p_a X^a(x) s(x)
              \right)
         \sqrt{|Dx|}\,.
           \end{equation}

\end{proposition}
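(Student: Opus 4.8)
The plan is to follow the same pattern as the proof of the first Proposition: reduce the statement to a one‑line coordinate computation. In a local chart, the most general first order operator on half‑densities with principal symbol $\X=X^a\p_a$ has the form
$$L\bs=\bigl(X^a(x)\p_a s(x)+c(x)\,s(x)\bigr)\sqrt{|Dx|}$$
for some locally defined function $c(x)$, the zeroth order part being the only freedom left once the principal symbol is fixed. The entire content of the Proposition is therefore that anti‑self‑adjointness forces $c={1\over2}\p_a X^a$, which is exactly formula \eqref{liederivativeofhalfdensity}; and that the resulting expression is the Lie derivative.

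Next I would compute the formal adjoint directly from the scalar product in \eqref{defofadjoint1}. Writing $\langle L\bs_1,\bs_2\rangle=\int_M (X^a\p_a s_1+c\,s_1)s_2\,|Dx|$ and integrating the first term by parts (half‑densities have compact support, so there is no boundary contribution) gives $L^*\bs=\bigl(-X^a\p_a s+(c-\p_a X^a)s\bigr)\sqrt{|Dx|}$. Imposing $L^*=-L$ and comparing the zeroth order coefficients yields $c-\p_a X^a=-c$, i.e. $c={1\over2}\p_a X^a$, which is \eqref{liederivativeofhalfdensity}. Since this pins down $L$ uniquely in every chart, and the right‑hand side of \eqref{liederivativeofhalfdensity} is by construction the globally well‑defined Lie derivative of densities of weight ${1\over2}$, the local expressions patch and $L=\L_\X$. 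Alternatively — closer in spirit to the affine‑space argument above — one first checks (by the same integration by parts) that $\L_\X$ is itself anti‑self‑adjoint with principal symbol $\X$, and then notes that for any $L$ as in the statement the difference $L-\L_\X$ is an operator of order $\le 0$, i.e. multiplication by a function $F$; such an operator is automatically self‑adjoint, $F^*=F$, but it is also anti‑self‑adjoint as a difference of anti‑self‑adjoint operators, whence $F=-F$ and $F\equiv 0$.

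The computation is routine; the only place requiring genuine care is the integration‑by‑parts step in the super setting, where the Koszul sign rule must be tracked consistently and the divergence $\p_a X^a$ read with the conventions for left derivatives (with the appropriate extra signs when $\X$ is odd). One should also record at the end that the local formula \eqref{liederivativeofhalfdensity} is coordinate independent — which is guaranteed a posteriori by the uniqueness just established together with the existence of $\L_\X$ as a natural operation on densities, so no separate transformation‑law check is needed.
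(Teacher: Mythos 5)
Your proof is correct and follows essentially the same route as the paper: write $L=X^a\p_a+B$, compute the formal adjoint by integration by parts to get $L^*=-X^a\p_a-\p_aX^a+B$, and impose $L^*=-L$ to conclude $B={1\over2}\p_aX^a$. The alternative argument you sketch (the difference $L-\L_\X$ is a function that is both self-adjoint and anti-self-adjoint, hence zero) is a harmless variant in the spirit of the paper's Proposition~1, and your remarks on coordinate independence and super signs are accurate but not needed beyond what the paper itself does.
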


\begin{proof}  Let $L=X^a\p_a+B$.
Condition of anti-self-adjointness means that
         $$
  L^*=\left(X^a\p_a+B\right)^*=-X^a\p_a-\p_a X^a+B=
-\left(X^a\p_a+B\right)\Rightarrow B={1\over 2}\p_aX^a\,,{\rm and}\,
   L=\L_\X\,.
      $$
\end{proof}

\begin{corollary}\label{commutator}
  Let $\Delta$ be an arbitrary self-adjoint second order operator
on half-densities with principal symbol $\S=S^{ab}\p_b\otimes \p_a$,
 $\Delta\in\F_\S$.  Then
 for an arbitrary function $f$,
    \begin{equation}\label{formulafromlmp}
   \Delta(f\bs)=f\Delta\bs+\L_{D_f},\quad
   \hbox{where}\, D_f=\S df=S^{ab}\p_bf\p_a\,.
     \end{equation}
\end{corollary}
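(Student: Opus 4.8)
The plan is to identify the operator
\[
 T\co\ \bs\mapsto\Delta(f\bs)-f\Delta\bs
\]
as a first order anti-self-adjoint operator on half-densities whose principal symbol is the vector field $D_f=\S df$, and then to invoke Proposition~\ref{firstorderisliederivative}, which forces $T=\L_{D_f}$; this is exactly \eqref{formulafromlmp}.

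The first step is to pin down the order and principal symbol of $T$. Writing $\Delta={1\over 2}S^{ab}\p_b\p_a+R$ with $R$ of order $\le 1$, the commutator of $R$ with the operator $m_f$ of multiplication by $f$ has order $\le 0$, while the Leibniz rule together with the symmetry of $S^{ab}$ gives
\[
 {1\over 2}S^{ab}\p_b\p_a(f\bs)-f\cdot{1\over 2}S^{ab}\p_b\p_a\bs
 = S^{ab}(\p_b f)\,\p_a\bs+{1\over 2}S^{ab}(\p_a\p_b f)\,\bs\,.
\]
Hence $T$ is an operator of order $\le 1$, and its principal symbol is the vector field $S^{ab}(\p_b f)\p_a=\S df=D_f$.

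The second step is anti-self-adjointness. Multiplication by $f$ is self-adjoint for the scalar product on half-densities, since $\int_M (fs_1)s_2\,|Dx|=\int_M s_1(fs_2)\,|Dx|$, i.e.\ $m_f^*=m_f$; using also $\Delta^*=\Delta$ we obtain
\[
 T^*=(\Delta\,m_f-m_f\,\Delta)^*=m_f^*\Delta^*-\Delta^*m_f^*=m_f\,\Delta-\Delta\,m_f=-T\,.
\]
Thus $T$ is a first order anti-self-adjoint operator on half-densities with principal symbol the vector field $D_f$, and Proposition~\ref{firstorderisliederivative} yields $T=\L_{D_f}$.

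I do not expect a genuine obstacle; the one place to be careful is the symbol bookkeeping in the first step. One must make sure that the commutator of the entire lower order part $R$ of $\Delta$ with $m_f$ is really of order zero, so that it does not alter the principal symbol that is fed into Proposition~\ref{firstorderisliederivative}. Since that proposition determines the zeroth order part of an anti-self-adjoint first order operator completely (namely ${1\over 2}\p_a X^a$ for symbol $X^a\p_a$), no ambiguity survives: whatever zeroth order terms are generated must assemble into precisely ${1\over 2}\p_a(S^{ab}\p_b f)$. As an independent check one may instead run the calculation in local coordinates using the representative $\Delta=\Delta^{(\rh)}+F(x)$ from Proposition 1, where every term of $\Delta(f\bs)-f\Delta\bs$ is explicit and matches \eqref{liederivativeofhalfdensity}.
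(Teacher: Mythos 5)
Your proposal is correct and follows essentially the same route as the paper: the paper's proof also observes that $[\Delta,f]$ is a first order anti-self-adjoint operator with principal symbol $\S df$ and then invokes Proposition~\ref{firstorderisliederivative}. You have merely filled in the symbol computation and the adjoint calculation $(\Delta m_f-m_f\Delta)^*=-(\Delta m_f-m_f\Delta)$ that the paper leaves implicit.
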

\begin{proof}
The operator $[\Delta, f]=\Delta\circ f-f\circ\Delta$
is a first order operator with principal symbol,
vector field  $\S df$. It is an anti-self-adjoint operator:
  $[\Delta,f]^*=-[\Delta,f]$.
Proposition \ref{firstorderisliederivative}
then implies that
equation \eqref{formulafromlmp} is obeyed.
\end{proof}

\begin{remark}\label{defwithoutvolumeform}
  We proved that the class $\F_\S$ is not empty by constructing an operator
in this class via a volume form. For further considerations,
it is useful to present a construction where one can
define an operator in the class $\F_\S$ without using a volume form.
  To do this we note that for any principal symbol $\S(x)$
one can consider a finite set of vector fields $\{\X_\l,\Y_\l\}$
such that
$\S$ can be decomposed as
                 $$
  \S(x)=\sum_\l\X_\l(x)\otimes_{_{\rm sym}}\Y_\l(x)\,,\quad
         S^{ab}(x)={1\over 2}
    \sum_\l\left(X^a_\l(x)Y^b_\l(x)+Y^a_\l(x)X^b_\l(x)\right)\,.
                 $$
Then consider an operator
                \begin{equation}\label{arbitrarydelta}
\Delta_\S=\Delta={1\over 2}
    \sum_\l\left(
         \L_{\X_\l}
         \L_{\Y_\l}+
         \L_{\Y_\l}
         \L_{\X_\l}
          \right)\,,
                 \end{equation}
where $\L_\X$ is the
Lie derivative \eqref{liederivativeofhalfdensity}
of half-densities along the vector
field $\X$. The principal symbol of this operator is equal to $\S(x)$.
 Due to the anti-self-adjointness of the Lie derivative of half-densities,
this operator is self-adjoint.
Hence equation \eqref{arbitrarydelta} defines a
self-adjoint operator on half-densities with principal symbol $\S$.

 The relation between $\S$ and the set
$\{\X_\l,\Y_\l\}$ of vector fields is not canonical,
Considering different decompositions we come to different operators in
the class $\F_\S$.

\end{remark}

One can see that an arbitrary $\Delta\in \F_\S$ has the following
 appearance in local coordinates,
     \begin{equation}\label{firstappearanceofpotential}
   \Delta={1\over 2}\left(S^{ab}(x)\p_b\p_a+
    \p_b S^{ba}(x)\p_a+U(x)\right)\,.
   \end{equation}
\begin{definition}
We say that $U(x)$ is a {\it second order
compensating field} or just a {\it compensating field}
of the second order
operator $\Delta\in\F_\S$.
\end{definition}

For example, for the operator \eqref{deltaviarho2}
the compensating  field is equal to
         \begin{equation}\label{potentialforstandardoperator}
        U^{(\rh)}(x)=-{1\over 2}\p_a\left(S^{ab}(x)
       \p_b\log\rho(x)\right)-
       {1\over 4}\p_a\log\rho(x)S^{ab}(x)
       \p_b\log\rho(x)\,.
              \end{equation}

Let us see how the compensating field transforms
under a change of local coordinates $(x^a)\mapsto (x^{a'})$.
  The expression for a half-density in new coordinates is
                  $
\bs= s'(x')\sqrt{|Dx'|}=s'(x'(x))\sqrt{
                 J_{\{x',x\}}}
                        \sqrt {|Dx|}
                  $, thus $s(x)=s'(x'(x))\sqrt{
                 J_{\{x',x\}}}$,
where  we denote
          \begin{equation}\label {notationforcolumechangings}
          J_{\{x',x\}}=
   \det
       \left({\p x'\over \p x}\right)\,.
          \end{equation}
  Performing straightforward calculations, we
see that in the new coordinates $(x^{a'})$,
                   $$
\Delta \bs=\left(
  \p_b\left(S^{ba}(x)\p_a \left(
          s'(x')\sqrt{J_{\{x',x\}}}
                \right)\right)
+                 U(x)s'(x')\sqrt{J_{\{x',x\}}}\,\,
         \right)\sqrt{J_{\{x,x'\}}}\sqrt{|Dx'|}=
                   $$
                $$
\left(\p_{b'}\left(S^{b'a'}(x')\p_{a'} s'(x')\right)+
   U'(x')s'(x')\right)\sqrt{|Dx'|}\,,
                $$
where for tensor field $\S=S^{ab}\p_b\otimes\p_a$,
          $$
    S^{a'b'}(x')=
     {\p x^{a'}\over \p x^a}S^{ab}(x)
         {\p x^{b'}\over \p x^b}\,,
             $$
and
         \begin{equation}\label{transformofcompensationfield}
   U'(x') =U(x)+
         {1\over 2}
     \p_{a'}\left(S^{a'b'}\p_{b'}\log J\right) -
        {1\over 4}\p_{a'}\log J S^{a'b'}\p_{b'} \log J\,,
        \end{equation}
where $J=J_{\{x',x\}}$ is the Jacobian defined by equation
\eqref{notationforcolumechangings}.
\begin{remark}
In these calculations we use the fact that for
Jacobian $J=J_{\{x',x\}}$,
         $$
{\p\over \p x^r}\log J=
    {\p x^c\over \p x^{a'}}{\p^2 x^{a'}\over \p x^c\p x^r}\,,
       $$
since for an arbitrary matrix-valued function $M$, $\delta\log M=
{\rm Tr}\left(M^{-1}\delta M\right)$.
\end{remark}

\begin{proposition}\label{prop2}
   An arbitrary, self-adjoint second order differential operator
on half-densities is well-defined by two geometrical objects:

\begin{itemize}
\item

The contravariant rank $2$ symmetric tensor field $\S$
which is the
principal symbol of the operator;

\item A second order compensating field  $U(x)$, a
 geometrical object which transforms under a change of coordinates
 according to equation \eqref{transformofcompensationfield}.
\end{itemize}

  For a given tensor field $\S$, a space of compensating fields
   $U$ is an affine space associated with the vector space of functions
on $M$.

 \end{proposition}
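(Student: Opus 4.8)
The plan is to show that the correspondence which sends a self-adjoint second order operator $\Delta$ on half-densities to the pair $(\S,U)$ — its principal symbol $\S$ together with the compensating field $U$ read off from the local normal form \eqref{firstappearanceofpotential} — is a bijection onto the set of pairs consisting of a contravariant rank $2$ symmetric tensor field and a quantity $U$ obeying the transformation rule \eqref{transformofcompensationfield}. Since \eqref{transformofcompensationfield} has already been derived, it remains only to check that this correspondence is well defined, injective and surjective, and then to read off the affine structure. Well-definedness is essentially already in hand: by the first Proposition, $\F_\S$ is non-empty and every $\Delta\in\F_\S$ differs from the operator $\Delta^{(\rh)}$ of \eqref{deltaviarho2} by a scalar function; as $\Delta^{(\rh)}$ has the shape \eqref{firstappearanceofpotential}, so does every $\Delta\in\F_\S$, so $U$ is unambiguously attached to $\Delta$ in each chart, and the computation leading to \eqref{transformofcompensationfield} shows that these chart-wise quantities glue into a single geometric object of the stated type.

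For injectivity, suppose $\Delta,\Delta'\in\F_\S$ have the same compensating field. By the first Proposition $\Delta'-\Delta=F(x)$ for some function $F$, whereas comparing the two local expressions \eqref{firstappearanceofpotential} forces $F=\tfrac{1}{2}(U'-U)=0$, hence $\Delta'=\Delta$. This also shows that, for a fixed symbol $\S$, two operators have compensating fields differing by a function, and that passing from $\Delta$ to $\Delta+F$ replaces $U$ by $U+2F$.

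For surjectivity, fix any $\Delta_0\in\F_\S$ with compensating field $U_0$ and let $U$ be an arbitrary quantity transforming according to \eqref{transformofcompensationfield}. The inhomogeneous term in \eqref{transformofcompensationfield} is built only from the Jacobian and from $\S$ and does not involve the operator, so on subtracting the transformation laws of $U$ and $U_0$ it cancels: $U-U_0$ transforms trivially and hence defines a globally defined function $F:=\tfrac{1}{2}(U-U_0)$ on $M$. Then $\Delta:=\Delta_0+F$ is self-adjoint with principal symbol $\S$, and by \eqref{firstappearanceofpotential} its compensating field equals $U_0+2F=U$. Together with the previous paragraph this shows that the functions on $M$ act freely and transitively on the compensating fields of a given symbol, i.e.\ that these form an affine space modelled on the vector space of functions.

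The only step that requires any care is the gluing in the surjectivity argument — that $U-U_0$ is a genuinely global function rather than merely a collection of chart-wise ones — and this is immediate once one notes that the inhomogeneous term in \eqref{transformofcompensationfield} does not depend on the chosen operator. Everything else is bookkeeping with the local normal form \eqref{firstappearanceofpotential} and the transformation law \eqref{transformofcompensationfield} already established above, so no genuine obstacle arises.
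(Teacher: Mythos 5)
Your proposal is correct and follows essentially the same route as the paper, which states this proposition without a separate proof as an immediate consequence of the local normal form \eqref{firstappearanceofpotential} and the transformation law \eqref{transformofcompensationfield} together with the affine structure from the first Proposition. Your explicit bijection argument --- in particular the observation that the inhomogeneous term in \eqref{transformofcompensationfield} is independent of the operator, so that the difference of two compensating fields for the same symbol is a genuine scalar function --- correctly fills in the details the paper leaves implicit.
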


 Consider examples of second order compensating fields.

\begin{example}  If a manifold $M$ is provided with a volume form
  $\rh=\rho(x)|Dx|$ then one can consider the second order
compensating field $U=U^{(\rh)}$ (see equations \eqref{deltaviarho2}
and \eqref{potentialforstandardoperator}). Volume form
  $\rh$ can be derived from a Riemannian metric $G$ ($\rh=\sqrt {\det G}|Dx|$),
 or a symplectic structure ($\rh=|Dx|$ in Darboux coordinates).
  See also further examples.
\end{example}

\begin{example}
A compensating field $U$ can be derived from a connection $\nabla$
on densities. If $\bs=s(x)|Dx|^\l$ is a density of weight $\l$
and $\X$ is a vector field, then the covariant derivative
of density $\bs$
along vector field $\X$ is equal to

     \begin{equation*}
\nabla_\X\bs=\nabla_{X^a\p_a}\left(s(x)|Dx|^\l\right)=
   X^a\left(\p_a s(x)+\l\g_as(x)\right)|Dx|^\l\,,
                 \end{equation*}
where $(\gamma_a)$ are such that for coordinate volume form
 $|Dx|$, $\nabla_{\p_a}|Dx|=\gamma_a|Dx|$.
Under a change of local coordinates
%                   $
%\nabla_a(\rho(x)|D(x)|)=
%  \left(\p_a\rho(x)+\gamma_a\rho(x)\right)|D(x)|$
 $\gamma_a$ transforms in the following way:
           $$
   \gamma_a=x^{a'}_a\left(\gamma_{a'}+
   \p_{a'}\log\det \left({\p x\over \p x'}\right)\right)=
   x_a^{a'}\gamma_{a'}-x^b_{b'}x^{b'}_{ba}\,.
           $$
We use short notations for derivatives:
 $x^{a'}_a={\p x^{a'}(x)\over \p x^a}$,
 $x^{a'}_{bc}={\p x^{a'}(x)\over \p x^b \p x^c}$.
Symbols $(\gamma_a)$ can
be considered as  first order compensating fields.

  Notice that every volume form $\rh=\rho(x)|Dx|$
 induces a connection $\nabla^{(\rh)}$
on densities:
          \begin{equation*}\label{connectionviavolumeform}
        \nabla^{(\rh)}_\X (\bs)=\rh^{\l}
         \p_\X\left(\rh^{-\l}\bs\right)=
                  X^a\left(\p_a s(x)+\l\g_a^{(\rh)}\right)|Dx|^\l\,,\quad
                  \g_a^{(\rh)}=-\p_a\log\rho(x)\,.
                \end{equation*}
This is a flat connection.
  One can see that the potential $U=U^{(\rh)}$ corresponding to volume
form $\rh$ (see equation \eqref{potentialforstandardoperator}
and the previous example) can be expressed via the connection
$\nabla^{(\rh)}$ in the following way: if $\gamma_a=\gamma_a^{(\rh)}$
then
       \begin{equation}\label{potentialforstandardoperator2}
        U^{(\rh)}={1\over 2}\p_a\g^a-{1\over 4}\gamma_a\gamma^a\,,
   \quad {\rm where\,\,} \,\,\g^a=S^{ab}\g_b\,.
%              {1\over 2}\p_a\left(S^{ab}
%       \g_b\right)-{1\over 4}\g_aS^{ab}(x)\g_b\,.
              \end{equation}
This
 formula works not only for a flat connection $\nabla^{(\rh)}$
defined by a volume form, but for an arbitrary connection $\nabla$
on densities. (It can be viewed in  a more general framework
if we  consider an upper connection $\gamma^a$.)

 A connection $\nabla$ on densities can be naturally produced via
an affine connection: $\g_a=-\G^b_{ba}$, where $\G^a_{bc}$
are the Christoffel symbols of the affine connection.
 If the affine connection is the Levi-Civita connection
of a Riemannian manifold $(M,G)$, then it
produces a flat connection corresponding to the canonical  volume form
of the Riemannian manifold:
        $$
\gamma_a=-\Gamma^b_{ba}=-\p_a\log\sqrt{\det G}=\g_a^{(\rh)}\,,
   {\rm where\,}\rh=\sqrt {\det G}|Dx|\,.
          $$

(See for detail \cite{KhVor2}, \cite{KhVor4}.)

\end{example}

In all previous examples the second order compensating field
arises as a secondary object from some structure, such as  a metric,
volume form or first order connection.
  In the next example we will consider a
compensating field (a potential) which arises as a  primary object.

\begin{example}\label{superspaceforcanonicalvolumeform}

For an arbitrary manifold $M$ consider its tangent bundle
 $TM$, and the supermanifold $\Pi TM$, where $\Pi$ is the
parity reversing functor in fibres. For local coordinates
 $(x^i)$ on $M$ one can assign coordinates
  $(x^i,dx^j)$ on $\Pi TM$, where $(dx^i)$ are odd coordinates,
$p(dx^i)=1$.
Functions $F(x,dx)$
on the supermanifold $\Pi TM$ can be identified with differential forms
on $M$.

 One can consider on $\Pi TM$ the canonical volume form $\rh$
which, in coordinates $(x^i,dx^j)$, is equal to the coordinate
volume form
           \begin{equation}\label{canonicalvolumeform}
          \rh=|D(x,dx)|.
          \end{equation}
This form remains invariant under a change of coordinates, since the
Berezinian (superdeterminant)
of the coordinate transformation $(x^i,dx^j)\mapsto (x^{i'}, dx^{j'})$
is equal to $1$:
  $x^{i'}=x^{i'}(x), dx^{j'}=dx^j{\p x^{j'}\over \p x^j}$ then
           \begin{equation*}
D(x',dx')=D(x,dx){\rm Ber}
      \left(
    {\p\left(x',dx'\right)\over \p\left(x,dx\right)}
       \right)=
         D(x,dx)
            {\rm Ber\,}
           \begin{pmatrix}
          {\p x^{i'}\over \p x^i} & 0\cr
         dx^j{\p^2 x^i\over \p x^i\p x^j}&{\p x^{i'}\over \p x^i}
          \end{pmatrix}=D(x,dx)\,.
          \end{equation*}
We see from equations \eqref{canonicalvolumeform}
and \eqref{potentialforstandardoperator}
that for every principal symbol $\S$ on $\Pi TM$
one can consider a canonical potential which vanishes in coordinates
  $(x^i,dx^j)$.

The constructions of this example are valid
if  $M$ is not only a  usual manifold,
but an arbitrary supermanifold. In this case the coordinates
$(x^i)$ are even and odd coordinates on $M$, and the
 parities of coordinates  $(dx^i)$ are opposite to those
of the coordinates $(x^i)$ on $M$, $p(dx^i)=p(x^i)+1$.
When there are odd variables present, functions on $\Pi TM$
correspond to so called pseudodifferential forms.

\end{example}
\begin{remark}\label{projective}{\it
Projective connection and compensating field.}

 In this article  we consider only
 operators of weight $0$, i.e. operators
 which do not change the weight of densities.
 One can consider second order operators
 of weight $\delta\not=0$.
 A contravariant symmetric tensor density
  $\S |Dx|^\delta=S^{ab}\p_a\otimes\p_b|Dx|^\delta$
defines the class $\F_\S^{(\delta)}$
of self-adjoint second order operators
 of weight $\delta$
which act on densities of the weight
   ${1-\delta\over 2}$ and takes values
 in densities of weight
   ${1-\delta\over 2}+\delta={1+\delta\over 2}$.
(See for detail \cite{KhVor2} and \cite{KhVor4}.)

In particular for the $1$-dimensional case
we come to the following very important construction.
On the line $\RR$, a canonical  symmetric tensor
density of weight  $\delta=2$, $\p_x\otimes\p_x|Dx|^2$
defines a second-order self-adjoint operator
on densities of weight ${1-\d\over 2}=-{1\over 2}$
                 \begin{equation}
\Delta=\left(\p_x^2+U(x)\right)|Dx|^2.
                  \end{equation}
  The compensating field $U(x)|Dx|^2$ is called a
  {\it projective connection}.
Under a change of coordinates $y=y(x)$
it transforms by the Schwartzian of the coordinate transformation:
    \begin{equation*}
  U'(y)|Dy|^2=U(x)|Dx|^2-{1\over 2}
                 \left(
      {y_{xxx}\over y_x}-{3\over 2}{y^2_{xx}\over y_x^2}
                 \right)
                 |Dx|^2\,.
    \end{equation*}
(See for example \cite{HitchSeg}, \cite{OvsTab} and \cite{KhVor4}.)

\end{remark}

\section {Second order operators on odd Poisson supermanifolds.}

  The considerations of the previous section can be easily
generalised for operators acting on supermanifolds,
this simply involves inserting the relevant signs
and using the super analogues for integrals, in particular
the determinant will change to the Berezinian (superdeterminant)
(as in example \eqref{superspaceforcanonicalvolumeform}).

   Most of the constructions
 in this section become meaningful
only in the super-setting.
So we will be careful about the signs arising during the calculations.
For example, the definition \eqref{defofadjoint1} of an adjoint
operator has to be rewritten
   $\langle\Delta\bs_1,\bs_2\rangle=
(-1)^{p(\Delta)p(\bs_1)}\langle\bs_1,\Delta^*\bs_2\rangle$,
where $p(\Delta),p(\bs_1)$ are  the parities of the operator $\Delta$
and $\bs_1$. For references on supermathematics
see the book of F.A.Berezin \cite{Berezin}.

   Let $M$ be a supermanifold provided with
 an odd rank $2$ contravariant symmetric tensor  
field $\E=E^{ab}(x)\p_b\otimes \p_a$.
  If $(x^a)$ are local coordinates on the manifold $M$
(hence forward we will usually suppress the prefix super,)
 then
         \begin{equation}\label{supersymmetric}
  \E\colon\quad   p(E^{ab}(x))=1+p(a)+p(b)\,,\qquad
              E^{ab}(x)=(-1)^{p(a)p(b)}E^{ba}(x)\,,
         \end{equation}
where $p(a)$ is the  parity of the coordinate $x^a$.
We will reserve the notation $\E=E^{ab}\p_b\otimes\p_a$
for such an odd rank $2$ tensor field.

Consider an arbitrary self-adjoint second order odd operator
on half-densities
with principal symbol $\E$,
            $$
\Delta\in\F_\E,\quad
\Delta={1\over 2}E^{ab}(x)\p_b\p_a+\ldots\,,
 \qquad p\left(\Delta\bs\right)=1+p(\bs)\,.
            $$
One can see that in local coordinates it has the appearance
        \begin{equation}\label{deltainlocalcoordinates}
\Delta={1\over2}\left(E^{ab}\p_b\p_a+\p_bE^{ba}\p_a+U(x)\right)\,
        \end{equation}
where second order compensating field $U(x)$ is an odd valued function.

Consider the `classical limit' of operator $\Delta$.

Let $(\,,\,)$ be the canonical even Poisson bracket
on the cotangent bundle $T^*M$:
    \begin{equation}\label{canoncialevenbracket}
   (F,G)=(-1)^{p(a)(p(F)+1)}\left({\p F\over \p p_a}{\p  G
 \over \p x^a}-(-1)^{p(a)}{\p F
  \over \p x^a}{\p  G\over \p p_a}\right)\,,
    \end{equation}
where $(x^a,p_b)$ are coordinates on $T^*M$ adjusted to local coordinates
 $(x^a)$ (under a change of local coordinates $(x^a)\mapsto (x^{a'})$,
     $p_{b'}={\p x^b\over \p x^{b'}}p_b$).
Then the  principal symbol $\E$
 defines  on functions on the manifold $M$
  the following odd bracket
                  \begin{equation}\label{derivedbracket1}
[F,G]=(-1)^{p(F)}\left(\left(H_E,F\right),G\right)=
(-1)^{p(a)p(F)}\p_a F E^{ab}\p_b G\,,\quad
 [x^a,x^b]=(-1)^{p(a)}E^{ab}\,,
                  \end{equation}
where $H_E={1\over 2}E^{ab}p_bp_a$
is an odd  Hamiltonian on $T^*M$, quadratic on the fibres of $T^*M$
(see for detail \cite{KhVor1}). This bracket is antisymmetric with respect
 to a shifted parity:
                   \begin{equation*}
   [F,G]=-(-1)^{(p(F)+1)(p(G)+1)}[G,F]\,.
                     \end{equation*}
  We say that  $(M,\E)$ is an odd Poisson manifold
if  $\E$ defines an odd Poisson bracket on $M$.
That is, if the Jacobi identity for the
bracket $[\,,\,]$ \eqref{derivedbracket1} is obeyed
            \begin{equation*}\label{jacobi}
   (-1)^{p(F)p(H)+p(G)}\left[F,[G,H]\right]+
(-1)^{p(F)p(G)+p(H)}\left[G,[H,F]\right]  %matt
   +(-1)^{p(G)p(H)+p(F)}\left[H,[F,G]\right]=0,
             \end{equation*}
in local coordinates
         \begin{equation}\label{jacobi1}
(-1)^{p(a)(p(c)+1)}E^{ap}\p_p E^{bc}+\hbox{cyclic permutations}=0\,.
        \end{equation}
The left hand side of this equation is nothing but
the canonical bracket of the odd Hamiltonian $H_\E$ with itself.
 Jacobi identities \eqref{jacobi1} are
     obeyed if and only if $(H_E,H_E)=0$.
  (See for detail \cite{KhVor1}.)

 The fact that the tensor field $\E$ equips $M$ with
  the structure of an odd Poisson manifold
   can be clearly viewed in terms of operator $\Delta$.

  \begin{proposition}\label{deltasquare}
  Let $\E=E^{ab}\p_b\otimes \p_a$ be an odd contravariant symmetric tensor
field on a supermanifold $M$, and $\Delta$ be an arbitrary second
 order odd self-adjoint operator with  principal symbol $\E$,
  $\Delta=E^{ab}\p_b\p_a+\dots,$ ($\Delta \in \F_\E(M)$). Then
\begin{itemize}

\item    Operator $\Delta^2={1\over 2}[\Delta,\Delta]$
  is an even anti-self-adjoint operator, $(\Delta^2)^*=-\Delta^2$,
 which has order equal to $3$, $1$, or else $\Delta^2=0$.

\item   Operator $\Delta^2$ is of order $1$, or else $\Delta^2=0$,
  if and only if $\E$ defines an odd Poisson structure on $M$, i.e.
if $(M,\E)$ is an odd Poisson supermanifold.

\item If $(M,\E)$ is an odd Poisson supermanifold then
  an even anti-self-adjoint operator $\Delta^2$ is
equal to the Lie derivative along some vector field $\X$:
           \begin{equation}\label{modularvector}
                 \Delta^2=\L_\X.
           \end{equation}
This equation assigns an even vector field
  $\X=\X(\Delta)$ to the operator $\Delta$ defining the
  odd Poisson structure. We call the vector field $\X$
  the modular vector field of the operator $\Delta$.
The modular vector field $\X(\Delta)$ is a Poisson vector field,
   that is, it preserves the odd Poisson structure.

\item If $\Delta'=\Delta+F$ is another odd operator defining
the same Poisson structure $(\Delta,\Delta'\in \F_\E(M),)$
then
  \begin{equation}\label{classisthesame}
\X(\Delta')=\X(\Delta)+D_F\,,
           \end{equation}
where $D_F$ is the even Hamiltonian vector field corresponding to
the odd function $F$: $D_F G=[F,G]$ for an arbitrary function $G$.

\end{itemize}
  \end{proposition}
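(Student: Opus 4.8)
The plan is to establish the four bullets in turn, using the symbol calculus of (super)differential operators together with the earlier results and the fact, recalled just before the statement, that $(H_\E,H_\E)=0$ is equivalent to the Jacobi identity \eqref{jacobi1}. First I would dispose of the general structural claims. Since $\Delta$ is odd, $[\Delta,\Delta]=2\,\Delta\circ\Delta$, so $\Delta^2=\tfrac12[\Delta,\Delta]$ is the honest square; it is even, and $(\Delta^2)^{*}=(-1)^{p(\Delta)p(\Delta)}\Delta^{*}\circ\Delta^{*}=-\Delta^2$ by self-adjointness of $\Delta$ and $p(\Delta)=1$, so $\Delta^2$ is anti-self-adjoint. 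A (super)commutator of two second order operators has order $\leq 3$, so $\Delta^2$ does too. Moreover, exactly as in the self-adjointness argument used in the proof of the first Proposition, an anti-self-adjoint operator cannot carry a non-zero part of even order: under the adjoint the order-$k$ principal symbol is multiplied by $(-1)^{k}$ (up to parity signs), so an even-order part would equal its own negative. Hence the order of $\Delta^2$ is $3$ or $1$, or else $\Delta^2=0$; indeed an order-$0$ operator, being multiplication by a function, is self-adjoint, and being simultaneously anti-self-adjoint it vanishes.

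Next I would compute the leading symbol of $\Delta^2$. The top-order symbol of a commutator equals the canonical Poisson bracket of the top-order symbols, so the order-$3$ symbol of $\Delta^2$ is $\tfrac12(H_\E,H_\E)$, where $H_\E=\tfrac12 E^{ab}p_bp_a$ is the principal symbol of $\Delta$ and $(\,,\,)$ is the bracket \eqref{canoncialevenbracket}. Therefore $\Delta^2$ has order $\leq 2$ — equivalently, by the first bullet, order $\leq 1$ or $\Delta^2=0$ — precisely when $(H_\E,H_\E)=0$, that is, precisely when $(M,\E)$ is an odd Poisson manifold. When this holds, $\Delta^2$ is a first order anti-self-adjoint operator on half-densities (or the zero operator), so Proposition \ref{firstorderisliederivative} applies and gives $\Delta^2=\L_\X$, where $\X=\X(\Delta)$ is the (even) vector field that is the principal symbol of $\Delta^2$; this is \eqref{modularvector}.

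To see that $\X$ is a Poisson vector field I would use that $\Delta$ commutes with its square, $\Delta\circ\Delta^2=\Delta^3=\Delta^2\circ\Delta$, so $[\Delta,\L_\X]=[\Delta,\Delta^2]=0$. But $[\Delta,\L_\X]$ is of order $\leq 2$, and its order-$2$ symbol is the canonical bracket of $H_\E$ with the symbol $X^ap_a$ of $\L_\X$, which up to sign is the Lie derivative of $H_\E$ along $\X$, namely $\tfrac12(\L_\X\E)^{ab}p_bp_a$. Its vanishing is exactly $\L_\X\E=0$, so $\X$ preserves the odd Poisson structure.

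Finally, for \eqref{classisthesame}, I would write $\Delta'=\Delta+F$ with $F$ an odd function and expand $(\Delta')^2=\Delta^2+(\Delta\circ F+F\circ\Delta)+F^2$. Here $F^2=0$ because $F$ is odd, while $\Delta\circ F+F\circ\Delta$ is the supercommutator of the two odd operators $\Delta$ and $F$, which by Corollary \ref{commutator} (in its odd version) equals $\L_{D_F}$, with $D_F$ the even Hamiltonian vector field $D_FG=[F,G]$. Hence $(\Delta')^2=\L_\X+\L_{D_F}=\L_{\X+D_F}$, so $\X(\Delta')=\X(\Delta)+D_F$. The only genuinely laborious part is keeping track of the super signs; the single structural ingredient, which does the real work in the second and third steps, is the identification of the leading symbol of a commutator with the canonical Poisson bracket of the symbols — this is what ties $\Delta^2$ to $(H_\E,H_\E)$ and hence to the Jacobi identity, and what forces the modular vector field to be Poisson.
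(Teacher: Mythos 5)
Your proposal is correct and follows essentially the same route as the paper's own proof: anti-self-adjointness of the commutator rules out even leading orders, the top symbol of $\Delta^2$ is identified with $(H_\E,H_\E)$ so that its vanishing is the Jacobi identity, Proposition~\ref{firstorderisliederivative} gives $\Delta^2=\L_\X$, the relation $\Delta\circ\Delta^2=\Delta^2\circ\Delta$ shows $\X$ is Poisson, and Corollary~\ref{commutator} applied to $[\Delta,F]$ yields \eqref{classisthesame}. The only difference is cosmetic: you spell out slightly more explicitly why $[\Delta,\L_\X]=0$ forces $\L_\X\E=0$ via the order-two symbol, where the paper simply asserts that preserving $\Delta$ implies preserving its principal symbol.
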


\begin{proof}  The first statement of the Proposition
follows from the fact that for two arbitrary
operators $\Delta_1,\Delta_2$
            $$
 [\Delta_1,\Delta_2]^*=
        [\Delta_1\circ\Delta_2-
 (-1)^{p(\Delta_1)p(\Delta_2)}\Delta_2\circ\Delta_1]^*=
        (-1)^{p(\Delta_1)p(\Delta_2)}\Delta^*_2\circ\Delta^*_1-
                   \Delta^*_1\circ\Delta^*_2=
        -[\Delta_1^*,\Delta_2^*]\,.
            $$
Thus for self-adjoint operators their commutator is anti-self-adjoint.
   Since $\Delta$ is odd, $\Delta^2={1\over 2}[\Delta,\Delta]$
is a commutator operator, so its order is less
than or equal to $3$. For the even operator $\Delta^2$,
                $$
\Delta^2=P^{abc}\p_c\p_b\p_a+\dots=
(E^{ab}\p_b\p_a+\dots)^2=2(-1)^b E^{ba}\p_aE^{cd}\p_d\p_c\p_b+\dots\,.
                 $$
The principal symbol of $\Delta^2$ defines a cubic Hamiltonian
$(H_\E,H_\E) = 2(-1)^b E^{ba}\p_aE^{cd}p_dp_cp_b$.
The vanishing of this cubic Hamiltonian
 is equivalent to the
 Jacobi identity \eqref{jacobi1} for the odd
   bracket \eqref{derivedbracket1}.
Therefore the Jacobi identity
  is obeyed if and only if
 the order of $\Delta^2$  is less than $3$. One can say more.
The order of an anti-self-adjoint operator $\Delta^2$
cannot be equal to $2$
or $0$.
 Indeed, if $\Delta^2=L^{ab}\p_b\p_a+P^a\p_a+F$ then
$\left(\Delta^2\right)^*=-\Delta^2=-L^{ab}\p_b\p_a + \dots$,
hence $L^{ab}\equiv 0$, i.e. the order of operator $\Delta^2$ is $\leq 1$.
If the $P^a$ also vanish, i.e. the order of $\Delta^2$ is $<1$,
 then anti-self-adjointness implies that $F$
also vanishes, i.e. $\Delta^2=0$.
  We see that the order of the operator $\Delta^2$ can be equal
to either $3$, $1$, or $\Delta^2=0$.
  Therefore if $\E$ defines an odd Poisson structure on $M$
then $\Delta^2$ is an even first order operator (or it vanishes).
This operator
is an anti-self-adjoint, even operator. Hence it is equal to
the Lie derivative $\L_\X$ along a vector field $\X$,
which is its principal symbol (see Proposition\,
\ref{firstorderisliederivative}):
      $\Delta^2=\L_\X=X^a\p_a+{1\over 2}(-1)^a\p_aX^a$.

 The vector field  $\X$
  preserves the operator $\Delta$:
$\L_\X \circ\Delta= \Delta^2\circ \Delta=
\Delta\circ \Delta^2=
\Delta\circ\L_\X$.
Hence it preserves the odd Poisson bracket
\eqref{derivedbracket1}, defined by the
principal symbol of this operator.
  The proof of equation \eqref{classisthesame} follows from
  Corollary\,\ref{commutator}. Since $F$ is an odd function then
            $$
(\Delta')^2=(\Delta+F)^2=
 \Delta^2+[\Delta,F]=\L_\X+\L_{\E df}=\L_{\X+D_F}\,.
            $$

\end{proof}

 Hamiltonian vector fields evidently preserve the
Poisson bracket, they are Poisson vector fields.
Consider the first Lichnerowicz-Poisson cohomology group of the
Poisson manifold $(M,\E)$
               $$
   H^1_{LP}(M,\E)={\hbox{Poisson vector fields on $M$}\over
                 \hbox{Hamiltonian vector fields on $M$}}.
               $$
Using the Proposition above we come to
\begin{definition}\label{definitionofmodularclass}
  Let $(M,\E)$ be a supermanifold provided
with an odd Poisson structure, and let $\Delta$
be an arbitrary odd second order self-adjoint
operator on half-densities, such that
$\E$ is its principal symbol, $\Delta\in \F_\E(M)$.
Let $\X=\X(\Delta)$ be the modular vector field\,\eqref{modularvector}
 of operator $\Delta$.
 Due to equation \eqref{classisthesame},
the  equivalence class  $[\X]$ in
 $H^1_{LP}(M,\E)$
 does not depend
on a choice of operator $\Delta$ in the class
$\F_\E(M)$.  We call this class the
{\it modular class of the odd Poisson supermanifold $(M,\E)$}.

\end{definition}

  It is useful to write down explicit formulae
for the modular vector field of a self-adjoint operator $\Delta$.
If $\Delta$ has the appearance
\eqref{deltainlocalcoordinates} in local coordinates, then
    \begin{equation*}\label{explicitformula}
\Delta^2=\L_\X=X^a\p_a+{1\over 2}(-1)^{p(b)}\p_bX^b=
               \left(
  {1\over 2}\p_b\left(E^{bc}\p_c\p_pE^{pa}\right)
  +(-1)^{p(a)} E^{ab}\p_b U\right)\p_a
   +{1\over 2}\p_b\left(E^{bp}\p_p U\right)\,,
    \end{equation*}
i.e.
     \begin{equation}\label{explicitformula}
\X=X^a\p_a=
               \left(
  {1\over 2}\p_b\left(E^{bc}\p_c\p_pE^{pa}\right)
  +(-1)^{p(a)} E^{ab}\p_b U\right)\p_a\,.
    \end{equation}

\smallskip

  Let us make the following remark regarding the  origin of this
construction, and
in particular about its counterpart for an even Poisson structure.

Let $(M,\P)$ be a usual manifold (without supervariables)
with a Poisson structure defined
by an even Poisson tensor $\P$.  For an arbitrary volume form
$\rh=\rho(x)|Dx|$
on $M$, one can consider the modular vector field  $\X=\X^{(\rh)}$
such that for an arbitrary function $F$
            \begin{equation}\label{batoperator0}
      \X^{(\rh)}(F)=     {\L_{D_F}\rh\over \rh}={\rm div\,}_\rh D_F=
    {1\over \rho(x)}\p_a
    \left(\rho(x)P^{ab}\p_b F\right)\,,
            \end{equation}
where $D_F$ is as usual the Hamiltonian vector field corresponding to
$F$: $D_F G=\{F,G\}_\P$.

One can see that $\X^{(\rh)}$ is a Poisson vector field.
If we choose  another volume form $\rh'=e^G\rh$, then
it is easy to see that   $\X^{(\rh)}$ changes by a Hamiltonian vector field:
  $\X^{(\rh')}=\X^{(\rh)}+D_G$
(compare with equation  \eqref{classisthesame}).  Thus one can assign
to the Poisson manifold $(M,\P)$ an equivalence class $[\X^{(\rh)}]$
of vector field\,\eqref{batoperator0} in the
cohomology group $H^1_{LP}(M,\P)$.
It is how A.Weinstein
defined the modular class of a usual Poisson manifold \cite{W2}.

Weinstein's modular class vanishes
for an even symplectic manifold when the Poisson tensor $\P$ is invertible;
in this case  there exists an invariant
volume form $\rh$.
If $\cal G$ is a Lie algebra with structure constants $c^i_{km}$,
then the modular class of  the Lie-Poisson bracket
  $\{u_i,u_k\}=c_{ik}^mu_m$ is represented by
 the covector $t_m=c^i_{im}$.

The Weinstein construction of modular class
can be easily performed for an even Poisson supermanifold,
but remarkably, for an odd Poisson manifold
equation \eqref{batoperator0} does not define a vector field.
The operation which assigns to an arbitrary function $F$
the divergence of
   the Hamiltonian vector field $D_F$ with respect to an
 {\it odd} Poisson structure is no longer a first
order differential operator. It becomes {\it a second order
 operator}  (see detail in \cite{Khjmp1} and \cite{Kh1})
     \begin{equation}\label{batoperator1}
     \Delta_\rh F={1\over 2}
       %(-1)^{p(F)}
  {\L_{D_F}\rh\over \rh}=%(-1)^{p(F)}
     {1\over 2}{\rm div\,}_\rh D_F\,.
        \end{equation}
Here $1\over 2$ is just a normalisation coefficient,
 and $D_F G=[F,G]_\E$ where $ [\,,\,]_\E$
is the odd bracket of this odd Poisson manifold.
 We change notation for operation
\eqref{batoperator0} stressing the fact that for
an odd Poisson structure it becomes a second order operator.
 The odd Poisson tensor $\E$ defining the odd Poisson structure
is a symmetric tensor
(see \eqref{supersymmetric})
and it is equal to the principal symbol of
this second order operator $\Delta_\rh$.
(The Poisson tensor $\P$
defining the usual Poisson structure
 and the modular vector field $\X^{(\rh)}$
is an antisymmetric tensor (see detail in \cite{Khjmp1},
\cite{Kh1} and \cite{KhVor1})).

 For example consider
 an odd symplectic case, a non-degenerate $n|n$-dimensional
  odd Poisson manifold. (The equality of
 odd and even dimensions is implied by the non-degeneracy 
of the odd bracket.)
Let $(q^i,\theta_j)$, $i,j=1,\dots,n$ be even and odd Darboux coordinates
of this  supermanifold:
          \begin{equation}\label{darbouxodd}
    [q^i,\theta_j]=\delta^i_j\,,\, [q^i,q^j]=0\,, [\theta_i,\theta_j]=0\,.
          \end{equation}
Then \begin{equation}\label{batoperator1}
  \Delta_\rh F={1\over 2}
      %(-1)^{p(F)}
      {\L_{D_F}\rh\over \rh}=
    %(-1)^{p(F)}
       {1\over 2}{\rm div\,}_\rh D_F=
    {\p^2 F(q,\theta)\over \p q^i\p\theta_i}+{1\over 2}[\log\rho,F]\,.
            \end{equation}
 In the case if $\rh=|D(q,\theta)|$ is a coordinate volume form,
then this operator
 becomes the famous Batalin-Vilkovisky $\Delta$-operator
which was introduced in the seminal work\,\cite{BatVyl1}.
(See \cite{Khjmp1}, \cite{SchCMPa}, \cite{Khcmp2}.)

Operator \eqref{batoperator1} and the related
operator on half-densities were studied
 in the article \cite{KhVor1}
for an arbitrary odd Poisson manifold.
In particular, in this article it was noticed that the
 operator $\Delta_\rh^2$  is a vector field and that
$\Delta^2_{\rh'}=\Delta^2_{e^G\rh}=\Delta^2_\rh + D_G$ if $\rh'=e^G\rh$.
 From this, the definition of modular class was
suggested for an odd Poisson manifold. One can
 see that this definition of modular class in \cite{KhVor1} is equivalent
to definition \eqref{definitionofmodularclass} given here.
 On the other hand, the considerations in \cite{KhVor1}
were performed  straightforwardly without
using self-adjointness property of operators on half-densities,
and examples of odd Poisson manifolds with a non-trivial
modular class were not considered.

\begin{remark}\label{noticeit}
  Notice that if $\rh$ is an arbitrary volume form
on an odd Poisson manifold
 $(M,\E)$ then operator $\Delta^{(\rh)}\in \F_\E$ on
half-densities with second order compensation field
  $U=U^{(\rh)}$ constructed with use of  compensation
field $U$ (see equations \eqref{deltaviarho} and
\eqref{potentialforstandardoperator}),
and the operator $\Delta_{\rh}$ on functions constructed
in equation \eqref{batoperator1} are related by equation
          \begin{equation*}\label{noticeit1}
  \Delta^{(\rh)}=\sqrt\rh\circ\Delta_{\rh}\circ{1\over \sqrt \rh}\,.
         \end{equation*}
\end{remark}

\section{Examples. Calculation of modular classes. }

We consider here examples of
second order self-adjoint odd operators which define an
odd Poisson bracket. We study the odd potential of these operators,
 and calculate the modular classes of these Poisson manifolds.

\subsection { An odd symplectic manifold. Bering's formula}

Let $(M,\E)$ be an odd symplectic manifold, i.e. an odd Poisson
tensor field $\E$ defines a non-degenerate odd Poisson structure.
The odd differential form $\e=dx^adx^be_{ab}(x)$ is inverse
to the odd Poisson tensor $\E=E^{ab}(x)\p_b\otimes \p_a$:
$E^{ac}(x)e_{cb}(x)=\delta^a_b$.

 The basic example of an odd symplectic manifold is the following:
for an arbitrary $n$-dimensional manifold $L$, consider its cotangent bundle
reversing the parity of coordinates in the fibres, i.e.
 an $n|n$-dimensional supermanifold $M=\Pi T^*L$. One can assign to
   local coordinates
  $(q^i)$ on the manifold $L$, local coordinates $w^a=(q^i,\theta_j)$,
($i,j=1,\dots,n$) on the supermanifold $\Pi T^*L$,
where $\theta_j$ are odd coordinates in the fibres,
the odd conjugate momenta (under a change of local
 coordinates $(q^i)\mapsto (q^{i'})$,
     $\theta_{j'}={\p q^j\over \p q^{j'}}\theta_j$).
Functions on the supermanifold $M=\Pi T^*L$ can be
identified with multivector
fields on $L$. In the same way as the cotangent bundle
$T^*L$ possesses the usual canonical even symplectic structure,
the supermanifold $\Pi T^*L$  possesses canonical
odd symplectic  structure; local coordinates $(q^i,\theta_j)$
 become Darboux coordinates
(see equation \eqref{darbouxodd})  of this odd symplectic supermanifold.
 The initial manifold  $L$
is a Lagrangian $n|0$-dimensional surface in this symplectic space.
 One can show that every odd symplectic
supermanifold is symplectomorphic to $\Pi T^*L$, where
$L$ is its  $n|0$-dimensional Lagrangian surface
  (see \cite{SchCMPa} and
\cite{Khcmp2}).

   On the odd symplectic supermanifold  $(M,\E)$,
 consider an arbitrary odd second order
self-adjoint operator on half-densities
with principal symbol $\E$. In arbitrary local coordinates
it has the appearance given by equation \eqref{deltainlocalcoordinates}.
      %    \begin{equation*}
%\Delta={1\over2}E^{ab}\p_b\p_a+{1\over2}\p_bE^{ba}\p_a+U(x)\,
%        \end{equation*}
An  odd potential $U(x)$ can be defined, for example,
via a volume form or a connection on densities
(see equations \eqref{potentialforstandardoperator}
and \eqref{potentialforstandardoperator2}),
but in the case when $M$ is an odd symplectic supermanifold
one can define an odd
potential $U(x)$ as a primary object in the following way:
   choose instead of local coordinates $(x^a)$, arbitrary
Darboux coordinates  $w^a=(q^i,\theta_j)$ and define the operator
$\Delta$ in these Darboux coordinates by the equation
                  \begin{equation}\label{canonicaloperator}
\Delta\colon\quad  \Delta\bs  =
\Delta\left(s(q,\theta)\sqrt{|D(q,\theta)|}\right)
= {\p^2 s(q,\theta)\over \p q^i\p\theta_i}\sqrt{|D(q,\theta)|}.
     \end{equation}
The remarkable fact is that this formula gives a well-defined operator,
since  expression \eqref{canonicaloperator} does not depend on a choice of
Darboux coordinates  (see for detail \cite{Khcmp2}).
We call this operator the canonical
odd Laplacian  of an odd symplectic manifold.

 In Darboux coordinates, the potential
of the odd canonical Laplacian vanishes.
  Klaus Bering calculated in \cite{Bering1} the
expression for the odd potential $U(x)$
of the canonical odd Laplacian in arbitrary coordinates.
 In our notation
it looks as follows:
        \begin{equation}\label{beringformula}
       U(x)={1\over 4}\p_b\p_a E^{ab}(x)-
        (-1)^{p(b)(p(d)+1)}{1\over 12}\p_a E^{bc}(x) e_{cd}(x)\p_b E^{da}(x).
         \end{equation}
  One can see by straightforward calculations
that the right hand side of this expression transforms
as a second order compensating field
\eqref{transformofcompensationfield}
under infinitesimal and linear coordinate transformations
  (see \cite{Bering1}).
Alternatively one can come to this answer by the following
considerations: if  $(w^a)$
are arbitrary Darboux coordinates, and
$x^a=x^a(w^{a'})$ then
$E^{ab}(x)=(-1)^{p(a)(p(a')+1)}
x^a_{a'}I^{a'b'}x^b_{b'}$,
where $I^{a'b'}$ are components (they are constants) of the matrix of
tensor $\E$ in Darboux coordinates.  Using
\eqref{transformofcompensationfield} we
will arrive at Bering's formula.
  We discuss this formula later in the next section.

The modular vector field of the canonical odd Laplacian
vanishes since $\left(\Delta\right)^2=0$.
This is evident in Darboux coordinates.
Hence the modular class of an odd symplectic manifold vanishes.

  We see that the modular class of an odd symplectic manifold vanishes
          as in the even case,
   despite the absence of an invariant volume form
  (see for detail \cite{KhVor1}).

\subsection {Koszul bracket.}\label{koszul}
Consider in this example another canonical construction
of an odd Poisson structure, which is now not necessarily symplectic.

  Let $(M,\P)$ be an arbitrary usual Poisson manifold. The Poisson structure
 is defined on $M$ by a rank $2$ contravariant 
antisymmetric tensor $\P$
obeying the Jacobi identity.
  One can assign to $(M,\P)$
an odd Poisson supermanifold in the
following way.
Consider as in example\,\ref{superspaceforcanonicalvolumeform}, the
 supermanifold $\Pi TM$ (the tangent bundle to $M$ with
parity reversed fibres).  Functions on $\Pi TM$ can be identified
with  differential forms on the manifold $M$.
It is convenient to work on  $\Pi TM$ in local coordinates
$(x^i,dx^j)$, where $(x^i)$ are coordinates on $M$
 ($x^i$ are even and $dx^i$ odd variables).
     The usual Poisson bracket $\{\,,\,\}=\{\,,\,\}_\P$
on $M$ can be canonically
lifted to an
  odd Poisson bracket (the Koszul bracket) $[\,,\,]$
 on $\Pi TM$ in the following way:
for two arbitrary functions $F,G$ on $M$
         \begin{equation*}
  [F,G]=0\,,\quad [F,dG]=-\{F,G\}_\P\,,\quad [dF,dG]=-d\{F,G\}_\P \,.
          \end{equation*}
These relations give a well-defined odd bracket.
In local coordinates, if $\{x^i,x^j\}=P^{ij}$, then
          \begin{equation}\label{koszul2}
[x^i,x^j]=0,\quad [x^i,dx^j]=[dx^i,x^j]=-P^{ij},\quad
 [dx^i,dx^j]=-dP^{ij}=-dx^k\p_k P^{ij}\,.
          \end{equation}
Thus we define an odd Poisson structure (Koszul bracket)
 $\E_\P$ on $\Pi TM$ via the even Poisson structure on the manifold
 $M$. The matrix of the tensor $\E_\P$ in coordinates $(x^i,dx^j)$
is the following:
             \begin{equation}\label{matrixofE}
    E^{ab}_{\P}(x,dx)=
                    \begin{pmatrix}
                    0 & -P^{ij}(x)\cr
                    P^{ij} & dx^k\p_k P^{ij}(x)\cr
                    \end{pmatrix},
                 \end{equation}
where $||E^{ab}_\P||$ is an $n|n\times n|n$ odd matrix, $n$ is the
dimension of the Poisson manifold $M$, and $||P^{ij}||$
is the $n\times n$ matrix of the Poisson tensor $\P$.

  One can consider a canonical volume form $\rh$ on 
$\Pi TM$ such that in coordinates $(x^i,dx^j)$,
$\rh=|D(x,dx)|$ (see equation \eqref{canonicalvolumeform}).
 Let  $\Delta$ be a second order self-adjoint operator
  on half-densities on the odd Poisson manifold $(\Pi TM,\E_\P)$,
  $\Delta\in \F_{\E_\P}$,
   with a second order compensating field, an
odd potential $U$,  defined by this canonical volume form,
$U=U^{(\rh)},\Delta=\Delta^{(\rh)}$ (see equations
\eqref{deltaviarho} and
\eqref{potentialforstandardoperator}). In other words, the
operator $\Delta$ has principal symbol $\E_\P$
and has the appearance of
\eqref{deltainlocalcoordinates}, where the tensor
$\E_\P=||E^{ab}_\P||$
is defined by the Koszul bracket \eqref{matrixofE},
and the potential  $U=U^{(\rh)}$
vanishes in local coordinates $(x,dx)$.

 The canonical volume form $\rh$ on $\Pi TM$ identifies
 the operator $\Delta=\Delta^{(\rh)}$ on half-densities
 with an operator $\Delta_\rh$ on functions on $\Pi TM$
(see the equation in remark \ref{noticeit}),
and further with the {\it Koszul-Brylinski
operator} $\p_\P$.   The Koszul-Brylinski operator 
         is the operator on 
 differential forms which is defined by the 
equation $\p_\P=[d,{\bf \io}_\P]$, 
where $d=dx^a{\p\over \p x^a}$ is the de Rham differential, and
  $\io_\P$ is the interior product with the bivector $\P$.
  If the volume form is canonical,
the identification of differential forms on $M$ with functions on $\Pi TM$
identifies the two operators $\p_\P$ and $\Delta_\rh$. 
The Jacobi identity for the Poisson tensor $\P$ implies that
$\io_\P^2=0$, and hence for the operator $\Delta$ on
half-densities, $\Delta^2=0$ also.
 Since  $\Delta^2=0$,
  the modular vector field of the operator $\Delta$ vanishes, 
and therefore the modular class of the odd Poisson manifold 
$(\Pi TM, \P)$ vanishes.
(See \cite{yvette} for details, where the relation between the 
Koszul-Brylinski operator and the operator 
$\Delta_\rh$ on functions was studied.)

\begin{remark}
The
vanishing of the modular vector field for operator $\Delta$
on half-densities can be checked
by the following straightforward considerations.
In coordinates $(x^i,dx^j)$, the odd potential $U=U^{(\rh)}$ vanishes,
and the modular vector field $\X$ of the operator $\Delta$
 according to equation \eqref{explicitformula} has the
following appearance
                \begin{equation}\label{modularfieldincoord}
          \X(\Delta)=
          {1\over 2}\p_b\left(
       E^{bc}_\P(x,dx)\p_c\p_pE_\P^{pa}(x,dx)
                 \right)\p_a.
                \end{equation}
Notice that if a point $x$ of manifold $M$ is regular,
 (i.e. rank of Poisson tensor $\P$ is locally
constant in a vicinity of this point,) then one can find
 in a vicinity of this point
Darboux-Lie coordinates such that in these coordinates, the 
 components $||P^{ij}||$ of the Poisson tensor  $\P$
are constants (see \cite{W1} for details).
 This means that the entries of the matrix $||E^{ab}_\P||$ of the odd Poisson
tensor $\E_\P$ (see \eqref{matrixofE}) are
 also constants.
Hence equation \eqref{modularfieldincoord} implies
that $\X(\Delta)$ vanishes at regular points of $M$.  Thus
the modular field vanishes at all points, since regular points are dense
in $M$.

\end{remark}

\subsection {Example of non-trivial modular class}

  Now we consider an example of an
odd Poisson manifold with a non-vanishing
modular class.

  Let $N$ be an arbitrary  $p|q$-dimensional supermanifold.
  We add to $N$ an additional
coordinate --`odd time', by
considering a new $p|q+1$-dimensional supermanifold
       $$
   \N=N\times \Pi \RR\,,
       $$
where $\Pi \RR$ is the $0|1$-dimensional odd line.
Local coordinates on the supermanifold $\N$ are
$(x^a,\tau)$,
where $x^a$ are even and odd coordinates on $N$, and
 $\tau$ is an odd coordinate on $\Pi \RR$.

   The construction of the supermanifold
$\N=N\times\Pi \RR$ is
in a certain precise sense adjoint to the construction of
the supermanifold $\Pi TM$, which we used in
example \ref{superspaceforcanonicalvolumeform} and
in the construction in subsection \ref{koszul}.
Namely
                   $$
\Maps (N\times \Pi \RR, M)=\Maps (N, \Pi TM)\,,\quad
\hbox{(natural isomorphism)}
                   $$
since
                   $$
         \Pi TM=\underline{\Maps}(\Pi\RR,M)\,,\quad
         \hbox{(the supermanifold of `odd paths')}\,.
                   $$

Let  $\et$ be an arbitrary odd vector field on supermanifold $N$,
       $\et=\eta^a(x)\p_a$, ($p(\et)=1$).
   Consider on the supermanifold $\N=N\times \Pi \RR$ two vector fields,
  even and odd, $\A=\tau\et$ and $\B={\p\over \p\tau}$.
These vector fields define a
 second order, odd self-adjoint operator on half-densities
                  \begin{equation}\label{operatordefinesstructure}
\Delta={1\over 2}\left(\L_\A\circ\L_\B+\L_\B\circ\L_\A\right)=
{1\over 2}\left(\L_{\tau\et}\circ\L_{\p_\tau}+
                \L_{\p_\tau}\circ\L_{\tau\eta}\right)\,,
            \end{equation}
where  $\L_\A,\L_\B$  are the Lie derivatives of half-densities.
(See equation \eqref{arbitrarydelta} in
Remark\,\ref{defwithoutvolumeform}.)
In local coordinates
         $$
\L_\A = \tau\eta^a(x)\p_a+{1\over 2}\tau\p \eta,\,
\left(\p \eta=\p_a \eta^a(x)\right)\,,\qquad
\L_\B = {\p\over \p\tau}\,,
         $$
and so
        \begin{equation*}\label{deltainexample3}
 \Delta=\tau\eta^a(x)\p_a\p_\tau+
{1\over 2}\eta^a(x)\p_a+{1\over 2}\tau\p\eta \p_\tau+{1\over 4}\p\eta\,,
        \end{equation*}
One can see by straightforward calculations that
   \begin{equation}\label{modularfieldexample}
\Delta^2={1\over 4}\eta^a\p_a\eta^b\p_b+{1\over 8}\eta^a\p_a\p\eta=
    \L_\X\,,\quad {\rm where}\,\,
    \X=
 \X(\Delta)={1\over 4}\eta^b\p_b\eta^a\p_a=
{1\over 4}\et\et={1\over 8}[\et,\et]\,.
    \end{equation}
  Since $\Delta^2$ is a vector field, Proposition \ref{deltasquare} 
  tells us that the
principal symbol of the operator
$\Delta$ defines an odd Poisson structure $\{\,,\,\}$ on $\N$ 
 given in local coordinates as
                   \begin{equation}\label{oddbracketintheexample}
        \{x^a,\tau\}= -\{\tau,x^a\} =\tau\eta^a\,,
       \{x^a,x^b\}=0\,, \{\tau,\tau\}=0\,.
                    \end{equation}
 Consider the modular vector field $\X$ defined by equation
\eqref{modularfieldexample}. One can see that if
 $\X\not=0$, then the modular class of this Poisson manifold does not vanish.
Indeed, the Poisson bracket \eqref{oddbracketintheexample}
of two arbitrary functions
belongs to the ideal generated by the odd variable $\tau$,
it is proportional to $\tau$, the `odd time'.
Thus all the components of an arbitrary Hamiltonian
vector field are proportional to $\tau$ since
   $D_F=\{F,x^a\}\p_a+\{F,\tau\}\p_\tau$. On the other hand, the
components of the modular vector field \eqref{modularfieldexample}
do not depend on $\tau$. Hence the
modular class $[\X]$ of the modular vector field
\eqref{modularfieldexample} is not trivial in the
cohomology group
$H^1_{LP}(\N,\E)$ of the Poisson manifold $(\N,\E)$ (if $\X\neq 0$).
 The considerations of this example can be summarised in the
following statement.
\begin{theorem}\label{theorem1}
  Let $N$ be an arbitrary supermanifold,
and let $\et$ be an arbitrary odd vector field on this supermanifold.
   The pair $(N,\et)$ defines the odd Poisson manifold
   $\N=N\times \Pi\RR$ with Poisson structure expressed via
  the principal symbol of second order operator
\eqref{operatordefinesstructure} (see equations
\eqref{oddbracketintheexample}).

     The  even vector field  $\X={1\over 8}[\et,\et]$ is the
modular vector field
of this operator ($\Delta^2=\L_\X$),
 and this  vector field (if it does not vanish,)
defines a non-trivial modular
class of this odd Poisson supermanifold.
\end{theorem}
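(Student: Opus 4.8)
The plan is to assemble the theorem from the computations carried out just above its statement, organised in three steps: first, identify the operator $\Delta$ of \eqref{operatordefinesstructure} as a genuine second order self-adjoint odd operator on half-densities and read off its principal symbol; second, compute $\Delta^2$ and exhibit it as a Lie derivative $\L_\X$; third, show that the class $[\X]$ is non-trivial whenever $\X\ne 0$.

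First I would note that $\A=\tau\et$ (even) and $\B=\p_\tau$ (odd) are globally defined vector fields on $\N=N\times\Pi\RR$ ($\p_\tau$ because $\tau$ is a global coordinate on $\Pi\RR$, and $\et$ because it pulls back from $N$), so Remark \ref{defwithoutvolumeform}, i.e.\ equation \eqref{arbitrarydelta}, shows at once that the symmetrised composition $\Delta=\tfrac12(\L_\A\L_\B+\L_\B\L_\A)$ of the associated Lie derivatives of half-densities is a well-defined odd self-adjoint operator of order $2$, with principal symbol $\E=\A\otimes_{\rm sym}\B$. Expanding $\L_\A$ and $\L_\B$ in the coordinates $(x^a,\tau)$ via \eqref{liederivativeofhalfdensity} produces the explicit local form of $\Delta$ and shows that $\E$ is exactly the tensor whose associated bracket \eqref{derivedbracket1} is \eqref{oddbracketintheexample}, namely $\{x^a,x^b\}=\{\tau,\tau\}=0$ and $\{x^a,\tau\}=\tau\eta^a$.

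Next I would compute $\Delta^2=\tfrac12[\Delta,\Delta]$ directly in these coordinates, obtaining the first order operator of \eqref{modularfieldexample}; matching it against $\L_\X=X^a\p_a+\tfrac12(-1)^{p(b)}\p_bX^b$ identifies $\X=\tfrac14\eta^b\p_b\eta^a\,\p_a=\tfrac14\et\et=\tfrac18[\et,\et]$, which is a genuine (even) vector field since for odd $\et$ the product $\et\et=\tfrac12[\et,\et]$ carries no second order part. Because $\Delta^2$ then has order $\le 1$, Proposition \ref{deltasquare} applies verbatim: the symbol $\E$ satisfies the Jacobi identity \eqref{jacobi1}, so $(\N,\E)$ is an odd Poisson supermanifold with bracket \eqref{oddbracketintheexample}, the relation $\Delta^2=\L_\X$ holds, and $\X$ is the modular vector field of $\Delta$ and a Poisson vector field. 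The only delicate point in this step is the sign bookkeeping when squaring $\Delta$, since $\A$ is even while $\B$, $\et$ and $\tau$ are odd; I would cross-check the coefficient $\tfrac18$ by computing $\tfrac18[\et,\et]=\tfrac14\et\et$ directly from the definition of the super-commutator.

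The genuinely new point, and the step I expect to be the main obstacle, is the non-triviality of $[\X]$. Here the key observation is that the bracket \eqref{oddbracketintheexample} takes values in the ideal generated by $\tau$: by the Leibniz rule any $\{F,G\}$ is a combination of the basic brackets, each of which is either $0$ or a multiple of $\tau\eta^a$, hence divisible by $\tau$. Consequently every Hamiltonian vector field $D_F=\{F,x^a\}\p_a+\{F,\tau\}\p_\tau$ has all of its coefficients divisible by $\tau$, a property evidently preserved under sums. But the coefficients $\tfrac14\eta^b\p_b\eta^a$ of $\X$ depend on the $x^a$ only, so they are not divisible by $\tau$ unless they all vanish; thus if $\X\ne 0$ then $\X$ is not a Hamiltonian vector field, and $[\X]\ne 0$ in $H^1_{LP}(\N,\E)$, which is the assertion of the theorem.
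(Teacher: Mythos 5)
Your proposal is correct and follows essentially the same route as the paper: the construction of $\Delta$ via the symmetrised Lie derivatives of Remark \ref{defwithoutvolumeform}, the direct computation of $\Delta^2=\L_\X$ with $\X=\tfrac18[\et,\et]$ combined with Proposition \ref{deltasquare}, and the non-triviality argument that every Hamiltonian vector field has coefficients in the ideal generated by $\tau$ while the coefficients of $\X$ are $\tau$-independent. No gaps; this matches the paper's own justification preceding the theorem.
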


In the next example we consider a
special case of this modular class.

\begin{remark}
     Notice that in particular
   $C(N\times \Pi\RR)=C(N)\oplus C(\Pi N)$.
For an arbitrary function $F(x,\tau)$ on $N\times\Pi \RR$
              $$
F(x,\tau)=f(x)+\tau g(x)\mapsto \begin{pmatrix}f(x)\cr g(x)\cr\end{pmatrix}\,.
                  $$
 One can see that in this representation the actions
\eqref{operatordefinesstructure},
\eqref{modularfieldexample}  of operators
$\Delta$ and $\Delta^2$ have the following appearance:
              $$
\Delta\begin{pmatrix}f(x)\cr g(x)\cr\end{pmatrix}=
{1\over 2}\L_\et\begin{pmatrix}f(x)\cr g(x)\cr\end{pmatrix}\,,
              $$
and       $$
\Delta^2\begin{pmatrix}f(x)\cr g(x)\cr\end{pmatrix}=
{1\over 4}\L^2_\et\begin{pmatrix}f(x)\cr g(x)\cr\end{pmatrix}=
{1\over 4}\L_\et\circ\L_\et
\begin{pmatrix}f(x)\cr g(x)\cr\end{pmatrix}=
{1\over 8}\L_{[\et,\et]}
\begin{pmatrix}f(x)\cr g(x)\cr\end{pmatrix}\,.
              $$
\end{remark}

\subsection {Modular class and Nijenhuis
bracket}

  In this example, using Theorem\,\ref{theorem1}, we study the relation
between the Nijenhuis
bracket of form-valued
vector fields  and modular classes.

   The Nijenhuis
bracket is the bracket of vector fields with
values in differential forms.  Recall its construction.
Let $\X=\X(x,dx)$ be a vector
field  with values in differential forms on a manifold $M$, e.g.
   $\X=dx^jX_j^i(x)\p_i$ is a vector field with values
in $1$-forms on $M$ (which corresponds to
  a linear operator on tangent vectors to $M$).

Consider the supermanifold $\Pi TM$ (see
examples \ref{superspaceforcanonicalvolumeform}, \ref{koszul}).
   Every vector field  $\X(x,dx)$ with values in
differential forms can be lifted  to a vector field
 on the supermanifold $\Pi TM$:
  \begin{equation}\label{lifting}
   \X=X^i(x,dx)\p_i\mapsto \hat\X=X^i(x,dx)\p_i+
           (-1)^{p(\X)}dx^k{\p\over \p x^k}X^i(x,dx)
           {\p\over \p dx^i}.
               \end{equation}
The lifting is uniquely defined by the condition that the lifted
field $\hat\X$
commutes with the de Rham differential $d=dx^k{\p\over \p x^k}$:
             \begin{equation*}
\widehat\X\colon \qquad\begin{cases}
            p_*\widehat \X=\X\,,\quad p\colon \Pi TM\to M\cr
             [\widehat\X,d]=0.\cr
                        \end{cases}
             \end{equation*}
  The Nijenhuis
 bracket $[\X,\Y]_N$ of form-valued vector fields
  $\X,\Y$ can be defined as a form-valued vector field
such that
      \begin{equation*}
               [\X,\Y]_N\colon\quad
\widehat{[\X,\Y]_N}=[\widehat{\X},\widehat{\Y}].
                \end{equation*}
The RHS of this equation is just the usual commutator of vector fields
on the supermanifold $\Pi TM$:
      \begin{equation*}
   [\widehat{\X},\widehat{\Y}]=
      \widehat{\X}\widehat{\Y}-
   (-1)^{p(\hat\X)p(\hat\Y)}\widehat{\Y}\widehat{\X}\,.
                \end{equation*}
(Details of the construction above for the Nijenhuis bracket can be found
  e.g. in \cite{KhVor5}. This bracket was discovered in the 1950s and predates
supermathematics, however it is through the language of supermathematics
that this construction is best described.)

In particular, if $\X(x,dx)$ is odd, i.e. it takes values
in usual differential forms of rank $2k+1$, then
 the bracket $[\X,\X]_N$ of this vector with itself
  is in general, not trivial.
  For example, let $\X=dx^jX_j^i(x)\p_i$ be a vector field with values
in usual $1$-forms on $M$, i.e. $\X=\X(x)$ is a section of the
usual linear operators on the tangent bundle.
  The lifted vector field on $\Pi TM$ is equal to
                       $
       \widehat {\X}(x,dx)=dx^jX_j^i(x)\p_i-dx^k dx^j\p_k X^i_j(x)
        {\p \over \p dx^i}
                       $.
This is an odd vector  field on $\Pi TM$. One can see that
  \begin{equation}\label{bracketoflienaroperatorwithitself}
[\X(x),\X(x)]_N=2dx^jdx^r (X^m_j(x)\p_m X^i_r(x)+\p_r X^m_j(x)
              X_m^i(x))\p_i\,.
          \end{equation}

Now return to modular classes.

   Let $\X(x,dx)$ be an arbitrary odd vector field
  on a   manifold $M$ with values in differential forms,
   e.g. with values in $1$-forms, and
    let $\et=\widehat {\X}(x,dx)$ be its lifting \eqref{lifting}
to the supermanifold $\Pi TM$.
    According to Theorem \ref{theorem1}, the pair $(\Pi TM,\et)$
  defines a supermanifold $\N=\Pi TM\times \Pi \RR$
provided with an induced odd Poisson structure such
  that the modular class of this Poisson manifold
is equal to $$
    {1\over 8}[\et,\et]=
  {1\over 8}[\widehat{\X}(x,dx),\widehat{\X}(x,dx)]=
{1\over 8}\widehat{[\X,\X]_N}\,.
             $$
Given an arbitrary odd form-valued vector field $\X$
on a manifold we define an
odd Poisson supermanifold such that its modular
class is represented  by Nijenhuis bracket $[\X,\X]_N$.

% gluing below  the new text

\section{Equation $\Delta^2=0$. Discussion}

%  Odd Poisson structure as well as odd symplectic structure
%  was an odd cou
% became very important in mathematics after seminal work of Batalin and
%Vilkovisky.   They considered operator \eqref{}
%in odd symplectic (infinite-dimensional) space of fields and antifields
% and in terms of this operator
%wrote the famous Batalin-Vilkovisky equation
%                                $$
%              \Delta^{i{\cal S}\over \hbar}=0
%                 $$
%and its classical counterpart   $[\cal S,\cal S]=0$.
%It is difficult to overestimate the role of this equation on
%such branches of mathematical physics as.....
%on master action in the symplectic space of

           Let $(M,\E)$ be a supermanifold equipped with an odd
  Poisson bracket.
   What can we say about the solutions of the equation
          \begin{equation}\label{nilpotent}
       \Delta^2=0\,,
          \end{equation}
where $\Delta$ is a second order self-adjoint operator on half-densities
 with principal symbol $\E$?  In other words, does there exist an
 odd potential $U(x)$ such that
for the operator
$\Delta={1\over 2}\left(E^{ab}\p_b\p_a+\p_b E^{ba}\p_a+U(x)\right)$,
the condition
$\Delta^2=0$ is obeyed?
It follows from Proposition \ref{deltasquare}
that this  equation has a solution if and only if the modular class
of the odd Poisson manifold $(M,\E)$ vanishes.
 Two operators $\Delta, \Delta'\in\F_\E$ obey the equation
$\Delta^2={\Delta'}^2=0$, where $\Delta'=\Delta+F$,
  if and only if the Hamiltonian vector field $D_F$
vanishes, i.e. $F$ is a Casimir function of the Poisson bracket.

  Equation \eqref{nilpotent} can be rewritten as the condition
of the vanishing of the modular vector field
$\X(\Delta)\colon \Delta^2=\L_\X$.
On setting the modular vector field
(see \eqref{explicitformula})
 equal to zero,
we come to a
system of differential equations in the potential $U$:
          \begin{equation}\label{differentialequations}
U(x)\colon\X(\Delta)=0,\,{\rm i.e.}\,,
    (-1)^{p(a)} E^{ab}\p_b U+
  {1\over 2}\p_b\left(E^{bc}\p_c\p_pE^{pa}\right)
                    =0.
          \end{equation}
One can rephrase the statement above in the following way.
These differential equations have a solution
if and only if the modular class of the Poisson
manifold vanishes, and these solutions are unique
up to a Casimir function $F$ of the Poisson bracket.
  The modular class is the obstruction for a solution
of this equation.

   In the symplectic case (if the Poisson structure is non-degenerate,)
 the potential $U$  is defined up to an an odd constant (an auxiliary
odd parameter).
  For the canonical odd Laplacian
 \eqref{canonicaloperator}, this odd constant is equal to zero.
In the absence of odd constants,
equations \eqref{differentialequations} have unique solutions
which are given by Bering's formula \eqref{beringformula}.

 One can distinguish the following special case of an odd Poisson
manifold when there exists a canonical odd Laplacian.
  Suppose that an odd Poisson manifold consists
      of odd symplectic leaves which are all of the same dimension $p|p$,
 (the tensor $\E$ defining the Poisson structure has constant rank $p|p$),
i.e. in a vicinity of an arbitrary point there are
Darboux-Lie coordinates $(q^i,\theta_j,z^\a)$,  $i,j=1,\dots,p$
($q^i$ are even, $\theta_j$ are odd,) such that
   \begin{equation}
                     \begin{matrix}\label{darbouxlie2}
 \{q^i,\theta_j\}=\delta^i_j,\quad \{q^i,q^j\}=0,\quad
                      \{\theta_i,\theta_j\}=0,\quad (i,j=1,\dots,k)\,, \cr
      \{q^i,z^\a\}=\{\theta_j,z^\a\}=\{z^\a,z^\beta\}=0\,.\cr
                     \end{matrix}
                            \end{equation}
 One can consider the following example of such a Poisson
manifold: let $(M,\P)$ be a usual Poisson manifold of constant rank,
i.e. the Poisson tensor $\P$ has constant rank $2k$ at all points
 of $M$. Then the supermanifold $\Pi TM$ with the Koszul bracket
\eqref{koszul2} consists of $p|p$-dimensional symplectic leaves ($p=2k$).

  In the case when there exists coordinates
satisfying equations \eqref{darbouxlie2},
  one can define a canonical operator in a way similar to
\eqref{canonicaloperator}.
In a vicinity of an arbitrary point, choose Darboux-Lie coordinates
 \eqref{darbouxlie2}, and
 set
$\Delta={\p^2\over \p q^i\p \theta_i}$ in these
local Darboux-Lie coordinates.
  As in equation \eqref{canonicaloperator},
one can see that the operator is well-defined,
  it does not depend on the choice of Darboux-Lie
coordinates.
The potential of this operator vanishes in these coordinates.
 In the article \cite{Bering2} Bering wrote
  the formula for the canonical operator in the Poisson case which covers
this example.

     We must mention the following important fact:
the canonical odd Laplacian
assigns to an arbitrary volume form, the scalar function:
 $\rh\mapsto \s_\rh={\Delta\sqrt\rh\over\sqrt\rh}$. In the article
 \cite{BatBer1}, Batalin and Bering showed that
this function is proportional to the scalar curvature
of a torsion-free affine connection
compatible with the symplectic structure and
this volume form. In the paper \cite{BatBer2} the authors
 generalised this result
  for a class of odd Poisson manifolds
  of constant rank (see \eqref{darbouxlie2}).
The explanation  of this statement
will shed light on the geometry of the compensating field
of the odd canonical Laplacian.

 We will make one final remark.

  It is a standard textbook fact that any Riemannian manifold $(M,{\bf G})$
possesses a unique symmetric connection compatible with the metric
$\G$ (the Levi-Civita connection). An explicit formula
$\G^i_{km}=
{g^{ij}\over 2}\left(\p_kg_{mj}+\p_mg_{kj}-\p_jg_{km}\right)$
expresses the Christoffel symbols of the Levi-Civita connection
in terms of the functions $g_{ik}(x)$,
which are the matrix entries of the Riemannian metric
$\G=g_{ik}(x)dx^i dx^k$.
 Any odd symplectic supermanifold possesses a unique odd potential
$U(x)$ (it is a compensating field of the canonical odd Laplacian
\eqref{canonicaloperator}, which vanishes in Darboux coordinates).
Bering's formula \eqref{beringformula} explicitly expresses this potential
in terms of the functions $E^{ik}(x)$
which are the matrix entries of the non-degenerate Poisson tensor
$\E=E^{ab}\p_b\otimes \p_a$ defining this odd symplectic structure.

For an affine connection compatible with the
 odd symplectic structure, the situation is very different. There
 are many symmetric affine connections
  compatible with a given symplectic  structure on a manifold
 (one may take a connection such that its
  Christoffel symbols vanish in
  given set of Darboux coordinates).
    However there is no formula which expresses
at least one of these connections in terms of the
  functions $\w_{ik}(x)$,
  the matrix entries of
 the closed non-degenerate $2$-form
${\boldsymbol \w}=\w_{ik}dx^i\wedge dx^k$
 defining  symplectic structure.
     Indeed, suppose that there exists an expression
                 $$
    \G^i_{km}=\G^i_{km}\left(\w_{pq},{\p \w_{pq}\over \p x^r},\dots\right)
                 $$
 such that if the functions $\w_{pq}(x)$ are the components
of a closed non-degenerate differential $2$-form, then
the left hand side of this equation defines the Christoffel
symbols of a symmetric affine connection compatible with
the symplectic structure defined by the $2$-form
 $\om=\w_{ik}dx^i\wedge dx^k$.
  In a vicinity of an arbitrary point $x_0$, choose arbitrary
   Darboux coordinates $(x^i)$.
In Darboux coordinates all the functions $w_{ik}(x)$ are constants.
Hence the Christoffel symbols $\G^i_{km}$ are constants in arbitrary
Darboux coordinates.
  This contradicts the existence of non-linear
canonical transformations (transformations from Darboux coordinates to
another Darboux coordinate system).
The absence of an explicit formula comes from
the non-uniqueness.

\section*{Acknowledgments}

  We are happy to acknowledge A.Voronov and B.Kruglikov for encouraging
 us to begin the work on this article. It is a pleasure to also thank
   K.Mackenzie and T.Voronov for very useful
discussions.
   One of us (H.Kh.) learnt much about the peculiarities
  of Poisson geometry through discussions with  A. Bolsinov.
   Many thanks to him.

\end{document}